\def\BState{\State\hskip-\ALG@thistlm}
\newcommand{\sfn}{S_f^\N}
\newtheorem{problem}{Problem}
\newtheorem{define}{Definition}
\newtheorem{theorem}{Theorem}
\newtheorem{lemma}{Lemma}
\newtheorem{remark}{Remark}
\newtheorem{assume}{Assumption}
\newcommand{\R}{\mathbb R}
\newcommand{\N}{\mathcal{N}}
\newcommand{\D}{\mathcal{D}}
\newcommand{\V}{\mathcal{V}}
\newcommand{\E}{\mathcal{E}}
\newcommand{\Le}{\mathcal{L}}
\newcommand{\A}{\mathcal{A}}
\newcommand{\J}{\mathcal{J}}
\newcommand{\eps}{\epsilon}
\newcommand{\Z}{\mathbb{Z}}
\newcommand{\Tc}{\mathcal{T}}
\newcommand{\pth}[1]{\left(#1\right)} 
\DeclarePairedDelimiter{\ceil}{\lceil}{\rceil}
\DeclarePairedDelimiter{\floor}{\lfloor}{\rfloor}
\DeclarePairedDelimiter{\abs}{\lvert}{\rvert}
\newcommand{\rarr}{\rightarrow} 
\let\oldceil\ceil
\def\ceil{\@ifstar{\oldceil}{\oldceil*}}
\let\oldfloor\floor
\def\floor{\@ifstar{\oldfloor}{\oldfloor*}}
\let\oldnorm\norm
\def\norm{\@ifstar{\oldnorm}{\oldnorm*}}
\let\oldabs\abs
\def\abs{\@ifstar{\oldabs}{\oldabs*}}
\begin{document}

\title{\LARGE \bf Resilient Leader-Follower Consensus with Time-Varying Leaders in Discrete-Time Systems}

\author{James Usevitch and Dimitra Panagou
\thanks{James Usevitch is with the Department of Aerospace Engineering, University of Michigan, Ann Arbor; \texttt{usevitch@umich.edu}.
Dimitra Panagou is with the Department of Aerospace Engineering, University of Michigan, Ann Arbor; \texttt{dpanagou@umich.edu}.
The authors would like to acknowledge the support of the Automotive Research Center (ARC) in accordance with Cooperative Agreement W56HZV-14-2-0001 U.S. Army TARDEC in Warren, MI, and of the Award No W911NF-17-1-0526.}
}

\maketitle
\thispagestyle{empty}
\pagestyle{empty}

\acrodef{wrt}[w.r.t.]{with respect to}
\acrodef{apf}[APF]{Artificial Potential Fields}
\begin{abstract}
	The problem of consensus in the presence of adversarially behaving agents has been studied extensively in the literature. The proposed  algorithms typically guarantee that the consensus value lies within the convex hull of initial normal agents' states. In leader-follower consensus problems however, the objective for normally behaving agents is to track a time-varying reference state that may take on values outside of this convex hull. In this paper we present a method for agents with discrete-time dynamics to resiliently track a set of leaders' common time-varying reference state despite a bounded subset of the leaders and followers behaving adversarially. The efficacy of our results are demonstrated through simulations.
\end{abstract}

\IEEEpeerreviewmaketitle

\section{Introduction}

{

Guaranteeing the resilience of multi-agent systems against adversarial misbehavior and misinformation is a critically needed property in modern autonomous systems. As part of this need, the \emph{resilient consensus problem} has been the focus of much research for the past few decades. In this problem, normally behaving agents in a multi-agent network seek to come to agreement on one or more state variables in the presence of adversarially behaving agents whose identity is unknown. Within the last decade, several algorithms based upon the \emph{Mean-Subsequence-Reduced} family of algorithms \cite{kieckhafer1994reaching} have become a popular means to guarantee consensus of the normally behaving agents when the number of adversaries is bounded and the network communication structure satisfies certain \emph{robustness} properties. These discrete-time algorithms, which include the W-MSR, DP-MSR, SW-MSR, and QW-MSR algorithms \cite{leblanc2013resilient,dibaji2017resilient,saldana2017resilient,dibaji2018resilient}, guarantee that the final consensus value of the normal agents is within the convex hull of the normal agents' initial states.

A related problem in prior control literature is the \emph{leader-follower consensus problem}, where the objective is for normally behaving agents to come to agreement on the (possibly time-varying) state of a leader or set of leaders \cite{Dimarogonas2009leader,Ren2007,Ren2008consensus}. Prior work in this area typically assumes that there are no adversarially misbehaving agents; i.e. all leaders and followers behave normally. An interesting question to consider is whether the property of resilience can be extended to the leader-follower consensus scenario, i.e. whether follower agents can track the leader agents' states while rejecting the influence of adversarial agents whose identity is unknown. One aspect which prevents prior resilient consensus results from being extended to this case is that the (possible time-varying) leaders' states may not lie within the convex hull of the initial states of normal agents.
}

Recent work most closely related to the resilient leader-follower consensus problem includes \cite{leblanc2014,Mitra2016secure,mitra2018resilient,usevitch2018resilient}. In \cite{leblanc2014}, the problem of resilient distributed estimation is considered where agents employ a discrete-time resilient consensus algorithm to reduce the estimation error of individual parameters of interest. The authors assume that certain agents have a precise knowledge of their own parameters. These "reliable agents" drive the errors of the remaining normal agents to the static reference value of zero in the presence of misbehaving agents. In \cite{Mitra2016secure,mitra2018secure}, the problem of distributed, resilient estimation in the presence of misbehaving nodes is treated. The authors show conditions under which information about the decoupled modes of the system is resiliently transmitted from a group of source nodes to other nodes that cannot observe those modes. Their results guarantee exponential convergence to the reference modes of the system. In our prior work \cite{usevitch2018resilient}, we considered the case of leader-follower consensus to arbitrary static reference values using the W-MSR algorithm \cite{leblanc2013resilient}. In addition, the resilient leader-follower consensus problem is closely related to the \emph{secure broadcast} problem \cite{koo2006reliable,litsas2013graph,bonomi2018reliable}, where a dealer agent seeks to broadcast a message to an entire network in the presence of misbehaving agents.

{
In this paper, we address the problem of resilient leader-follower consensus with \emph{time-varying leaders} in the discrete-time domain. Specifically, we make the following contributions:
\begin{itemize}
	\item We introduce the Multi-Source Resilient Propagation Algorithm (MS-RPA). Under this algorithm, exact tracking of a set of time-varying leaders can be achieved by normally behaving follower agents in a finite time, despite the presence of a bounded number of arbitrarily misbehaving follower \emph{and leader} agents.
	\item We demonstrate conditions under which agents applying the MS-RPA algorithm achieve exact tracking of leader agents in finite time when agents are subject to input bounds while in the presence of a bounded number of arbitrarily misbehaving follower and leader agents.
\end{itemize}
Notation and relevant resilient consensus terms from prior literature are reviewed in Section \ref{sec:notation}. The problem formulation is given in Section \ref{sec:problem}. Our main results are outlined in Section \ref{sec:mainresults}. Simulations demonstrating our results are shown in Section \ref{sec:simulations}, and a brief conclusion is given in Section \ref{sec:conclusion}.

}

\section{Notation and Preliminaries}
\label{sec:notation}

{
The set of real numbers and integers are denoted $\R$ and $\Z$, respectively. The set of nonnegative reals and nonnegative integers are denoted $\R_+$ and $\Z_+$, respectively.
The cardinality of a set $S$ is denoted $|S|$. The set union, intersection, and set difference operations of two sets $S_1$ and $S_2$ are denoted by $S_1 \cup S_2$, $S_1 \cap S_2$, and $S_1 \backslash S_2$ respectively. We denote $\bigcup_{i=1}^n S_i = S_1 \cup S_2 \cup \ldots \cup S_n$. We also denote $B(x,r) = \{z \in \R : |x - z| \leq r \}$.
A digraph is denoted as $\D = (\V,\E)$ where $\V$ is the set of vertices or agents, and $\E \subset \V \times \V$ is the set of edges. An edge from $i$ to $j$, $i,j \in \V$, denoted as $(i,j) \in \E$, represents the ability of the \emph{head} $i$ to send information to the \emph{tail} $j$. Note that for digraphs $(i,j) \in \E \centernot\implies (j,i) \in \E$. The set of in-neighbors of agent $i$ is denoted $\V_i = \{j \in \V: (j,i) \in \E)\}$. Similar to \cite{leblanc2013resilient}, we define the inclusive neighbors of node $i$ as $\J_i = \V_i \cup \{i\}$. The set of out-neighbors of each agent $i$ is denoted $\V_i^{out} = \{k \in \V : (i,k) \in \E \}$.
}

\subsection{Resilient Consensus Preliminaries}

{
We briefly review several definitions associated with the resilient consensus literature that will be used in this paper.

\begin{define}[\cite{leblanc2013resilient}]
Let $F \in \Z_+$. A set $S \subset \V$ is \emph{F-total} if it contains at most $F$ nodes; i.e. $|S| \leq F$.
\end{define}

\begin{define}[\cite{leblanc2013resilient}]
Let $F \in \Z_+$. A set $S \subset \V$ is \emph{F-local} if $\forall t \geq t_0$, $t \in \Z$, $|S \cap \V_i(t)| \leq F$ $\forall i \in \V \backslash S$. 
\end{define}
In words, a set $S$ is $F$-local if it contains at most $F$ nodes in the neighborhood of each node outside of $S$ for all $t \geq t_0$.

}

The notion of strong $r$-robustness with respect to (w.r.t.) a subset $S$ is defined as follows:
\begin{define}[Strong $r$-robustness w.r.t. $S$ \cite{Mitra2016secure}]
Let $r \in \Z_+$, $\D=(\V,\E)$ be a digraph, and $S \subset \V$ be a nonempty subset. $\D$ is strongly $r$-robust w.r.t. $S$ if for any nonempty $C \subseteq \V \backslash S$, $C$ is $r$-reachable.
\end{define}

\begin{remark}
	Given a subset $S \subset \V$, it can be verified in polynomial time whether $\D$ is strongly robust w.r.t. $S$ \cite{mitra2018resilient}. 
\end{remark}

\section{Problem Formulation}
\label{sec:problem}

{

We consider a discrete-time network of $n$ agents with the set of agents denoted $\V$. The communication links between the agents are modeled by the digraph $\D = (\V,\E)$. Each agent $i \in \V$ has state $x_i(t) \in \R$. In addition, each agent is designated to be either a leader agent or a follower agent.
\begin{define}
	The set of leader agents is denoted $\Le \subset \V$. The set of follower agents is denoted $S_f = \V \backslash \Le$.
\end{define}

\begin{assume}
\label{a:partition}
The sets $\Le$ and $S_f$ are static and satisfy $\Le \cup S_f = \V$ and $\Le \cap S_f = \emptyset$.
\end{assume}

In many leader-follower settings, the state of the leaders is nominally designated to update according to a desired reference signal function which is propagated to the followers. We consider the case where followers do not have direct access to the reference signal function, but must rely on a feedforward signal communicated by the leaders and relayed through the network of followers. 

More formally, each agent has the communication rate $\eta \in \Z_+$, where $\eta$ represents the number of discrete communication time steps in between each state update time step. 
The set of time instants $\Tc = \{\tau \eta : \tau \geq 1,\ \tau \in \Z_+\}$ represent the time instances where agents update their states.
The set of discrete time steps $t \in \Z_+$ represent the communication times of each agent in $\V$. At each time step  $t \in \Z_+$, each agent $i \in \V$ is able to send its state value $x_i(t)$ to each  its out-neighbors $k \in \V_i^{out}$ and receive state values $x_k(t)$ from each of its in-neighbors $j \in \V_i$.

Each normally behaving leader agent $l$ has the dynamics
\begin{align}
	\label{eq:leader}
	x_l(t + 1) = \begin{cases} f_r(\tau) & \text{if } t - t_0 = \tau \eta -1,\ \tau \geq 1,\ \tau \in \Z_+ \\  x_l(t) & \text{otherwise},  \end{cases}
\end{align}
where $f_r : \R \rarr \R$ is the reference signal.
Each normally behaving follower agent $j$ has dynamics
\begin{align}
	\label{eq:follower}
	x_j(t+1) = x_j(t) + \alpha(t) u_j(t)
\end{align}
where 
\begin{align}
\label{eq:alpha}
	\alpha(t) = \begin{cases}1 & \text{if } t - t_0 = \tau \eta -1,\ \tau \geq 1,\
	\tau \in \Z_+ \\ 0 & \text{otherwise} \end{cases}
\end{align}
the control input functions $u_j : \R \rarr \R$ will be defined later. 


In contrast to much of the prior literature which typically assumes all agents apply nominally specified control laws, this paper considers the presence of \emph{misbehaving agents}:
\begin{define}
\label{def:misbehaving}
    An agent $j \in \V$ is \emph{misbehaving} if at least one of the following conditions hold:
    \begin{enumerate}
        \item There exists a time $t$ where agent $j$ does not update its state according to \eqref{eq:leader} and also does not update its state according to \eqref{eq:follower}.
        \item There exists a time $t$ where $j$ does not communicate its true state value $x_j(t)$ to at least one of its out-neighbors.
        \item There exists a time $t$ where $j$ communicates different state values to different out-neighbors.
\end{enumerate}
The set of misbehaving agents is denoted $\A \subset \V$.
\end{define}
\begin{define}
	The set of agents which are \emph{not} misbehaving are denoted $\N = \V \backslash \A$. Agents in $\N$ are referred to as \emph{normally behaving agents}.
\end{define}
Intuitively, misbehaving agents are agents which update their states arbitrarily or communicate false information to their out-neighbors. By Definition \ref{def:misbehaving}, the set of misbehaving agents $\A$ includes faulty agents, malicious agents, and Byzantine agents \cite{leblanc2013resilient}.

This paper considers scenarios where both followers \emph{and leaders} are vulnerable to adversarial attacks and faults, and therefore the set $\A \cap \Le$ may possibly be nonempty, and the set $\A \cap S_f$ may possibly be nonempty. The following notation will be used:
\begin{define}[Normally behaving agent notation]
The set of normally behaving leaders is denoted as $\Le^\N = \Le \backslash \A$. The set of normally behaving followers is denoted $S_f^\N = S_f \backslash \A$. 
\end{define}
\begin{define}[Misbehaving agent notation]
The set of misbehaving leaders is denoted as $\Le^\A = \Le \cap \A$. The set of misbehaving followers is denoted as $S_f^\A = S_f \cap \A$.
\end{define}

The purpose of this paper is to determine conditions under which normally behaving follower agents resiliently achieve consensus with the time-varying reference state of the set of normally behaving leader agents in the presence of a possibly nonempty set of misbehaving agents.

\begin{problem}[Resilient Leader-Follower Consensus]
\label{prob:leadfollow}
Define the error function $e(t) = \max_{i \in \sfn, l \in \Le^\N} |x_i(t) - x_l(t)|$. Determine conditions under which $e(t)$ is nonincreasing for all $t \geq t_0$, and there exists a finite $T \in \Z_+$ such that $e(t) = 0$ for all $t \geq t_0 + T$, in the presence of a possibly nonempty adversarial set $\A$.

\end{problem}

\section{Main Results}
\label{sec:mainresults}

By Definition \ref{def:misbehaving}, misbehaving agents may either update their state arbitrarily, or communicate misinformation to their out-neighbors. This brings up several challenges for normal followers seeking to track the leaders' reference state:
\begin{enumerate}
	\item Agents in $S_f^\N$ may have no knowledge about the set $\Le$. It is not assumed that leader agents are identifiable to any follower agent.	
	\item The set of misbehaving leaders $\Le^\A$ may possibly be nonempty, implying some normally behaving agents may receive misinformation from the leader set $\Le$. 
	\item There may exist agents $i \in S_f^\N$ which are not directly connected to any leaders, i.e. $\V_i \cap \Le = \emptyset$.
\end{enumerate}
To address these challenges, we introduce the \emph{Multi-Source Resilient Propagation Algorithm} with parameter $F$, which is described in Algorithm \ref{alg:MSRPA}.
\begin{algorithm}
\caption{\small{\textsc{MS-RPA with parameter $F$}:}}
\label{alg:MSRPA}

\textbf{Initialization:}
	\begin{enumerate}
		\item Each leader $l \in \Le$ begins with $x_l(t_0) = f_r(0)$
		\item Each follower $i \in S_f$ begins with $x_i(t_0) \in \R$ and $u_i(t_0) = 0$.
	\end{enumerate}
\textbf{At each time step $\bm{t \geq t_0,\ t \in \Z_+}$:}

	\emph{(Let $\tau'= \floor{\frac{t-t_0}{\eta} + 1}$, where $\tau' \eta$ is the next state update time step and $(\tau'-1)\eta$ is the most recent state update time step.)}
		\begin{enumerate}
	\item Each leader $l \in \Le$ updates its state as per \eqref{eq:leader}
	\item Each follower $i \in S_f$ updates its state as per \eqref{eq:follower}
	\item Each leader agent $l \in \Le$ broadcasts the value $f_r(\tau')$ to its out-neighbors.
	\item If $t \geq (\tau'-1)\eta + 1$, each follower agent $i$ which received a value $c \in \R$ from at least $F+1$ in-neighbors at time step $t-1$ sets 
	\begin{align*}
		u_i(t) = \begin{cases} c - x_i(t) & \text{if input unbounded} \\
		\frac{(c - x_i(t))u_M}{\max(u_M,|c - x_i(t)|)} & \text{if input bounded by } u_M, \end{cases}
	\end{align*}
	where $u_M \in \R$, $u_M > 0$. Denote this set of follower agents as $\mathcal{C}(t)$.
	\item Each agent $i \in \mathcal{C}$ broadcasts $c$ to its out-neighbors for all timesteps in the interval $[t,\tau'\eta-1]$.
	\item Each agent $i \notin \mathcal{C}(t)$ sets $u_i(t) = u_i(t-1)$.
\end{enumerate}
\end{algorithm}
\begin{remark}
	The MS-RPA algorithm is based on the Certified Propagation Algorithm (CPA) in \cite{koo2006reliable} (see also \cite{Zhang2012robustness}), but contains several significant theoretical differences. First, the CPA algorithm considers a single source agent invulnerable to adversarial attacks propagating a reference value to the network, whereas the MS-RPA considers multiple sources that each are \emph{vulnerable} to attacks (i.e. the set $\Le$). Second, the CPA algorithm considers the broadcast of one static message to the network, whereas the MS-RPA considers a time-varying reference signal. Third, the CPA algorithm does not take any state update input bounds into account, whereas the MS-RPA algorithm explicitly considers input bounds.
\end{remark}

Our first Lemma demonstrates that on every time interval $[t_0 + (\tau-1)\eta,t_0 + \tau\eta)$ for $\tau \geq 1$, the MS-RPA algorithm guarantees that the value of $f_r(\tau\eta)$ propagates resiliently from $\Le$ to all normal followers when $\eta$ is sufficiently large.

\begin{lemma}
\label{lem:broadcast}
Let $\D = (\V,\E)$ be a nonempty, nontrivial, simple digraph with $S_f$ nonempty. Let $F,\tau \in \Z_+$, $t_0 \in \Z$, and $u_M \in \R$, $u_M > 0$.
Suppose that $\A$ is an $F$-local set, $\D$ is strongly $(2F+1)$-robust w.r.t. the set $\Le$, and all normally behaving agents apply the MS-RPA algorithm with parameter $F$. If $\eta > |S_f|$, then for all $i \in \sfn$ and $\forall \tau \geq 1$ the control input $u_i$ satisfies 
{
\medmuskip=0mu
\thinmuskip=0mu
\thickmuskip=0mu
\begin{align}
u_i(t_0 + \tau\eta - 1) = \begin{cases}
f_r(\tau) - x_i(t_0 + \tau\eta - 1) & \parbox{5em}{if input unbounded,} \\
\frac{(f_r(\tau) - x_i(t_0 + \tau\eta - 1))u_M}{\max(u_M,|f_r(\tau) - x_i(t_0 + \tau\eta - 1)|)} & \parbox{5em}{if input is bounded by $u_M$.}
\end{cases}
\end{align}
}
\end{lemma}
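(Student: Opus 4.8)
The plan is to fix an update index $\tau \ge 1$ and argue over the single communication window $W_\tau = \{t_0 + (\tau-1)\eta, \dots, t_0 + \tau\eta - 1\}$. On $W_\tau$ the quantity $\tau'$ computed in the algorithm is constant and equal to $\tau$, so by Step 3 every normal leader broadcasts the one value $f_r(\tau)$ throughout the window. I would track the set $P(t) \subseteq \N$ of normal agents broadcasting $f_r(\tau)$ at time $t$; it contains $\Le^\N$ for the whole window, and by Step 5 it is nondecreasing in $t$ on $W_\tau$, since any follower entering $\C(t)$ keeps rebroadcasting $f_r(\tau)$ until $\tau'\eta - 1$. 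The claim then reduces to showing $\sfn \subseteq P(t_0 + \tau\eta - 1)$, i.e. that every normal follower has accepted $f_r(\tau)$ by the update instant: once that holds, such an $i$ still receives $f_r(\tau)$ from $F+1$ in-neighbors at $t_0 + \tau\eta - 2$ by monotonicity of $P(t)$, so $i \in \C(t_0 + \tau\eta - 1)$ and Step 4 forces $u_i(t_0 + \tau\eta - 1)$ into exactly the stated (possibly saturated) form with $c = f_r(\tau)$; since the state $x_i$ is frozen across $W_\tau$, the value $x_i(t_0+\tau\eta-1)$ appearing there is the correct argument.

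I would first establish a safety claim: no normal follower ever accepts a value $c \ne f_r(\tau)$ during $W_\tau$. This goes by induction on $t$. Accepting $c$ requires receiving $c$ identically from $F+1$ in-neighbors; since $\A$ is $F$-local, at most $F$ of any normal node's in-neighbors lie in $\A$, so at least one normal in-neighbor must broadcast $c$. That agent is either a normal leader, which broadcasts only $f_r(\tau)$, or a normal follower that accepted at an earlier step, which by the inductive hypothesis accepted $f_r(\tau)$; either way $c = f_r(\tau)$. Hence every member of $P(t)$ genuinely holds $f_r(\tau)$.

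The core is a liveness argument showing that $R(t) := \sfn \backslash P(t)$ strictly shrinks each step until it is empty. Suppose $R(t) \ne \emptyset$. Since $R(t) \subseteq S_f = \V \backslash \Le$, strong $(2F+1)$-robustness w.r.t. $\Le$ yields a node $i \in R(t)$ with at least $2F+1$ in-neighbors outside $R(t)$. By $F$-locality at most $F$ of these are adversarial, leaving at least $F+1$ normal in-neighbors outside $R(t)$; but the only normal agents outside $R(t)$ are those in $\Le^\N \cup (P(t) \cap \sfn)$, all of which broadcast $f_r(\tau)$ at time $t$. Thus $i$ receives $f_r(\tau)$ from $F+1$ in-neighbors and enters $\C(t+1)$, giving $|R(t+1)| \le |R(t)| - 1$. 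Acceptance may occur at each of the $\eta - 1$ instants $t_0+(\tau-1)\eta+1, \dots, t_0+\tau\eta-1$, and $\eta > |S_f| \ge |\sfn| \ge |R(t_0+(\tau-1)\eta)|$ gives $\eta - 1 \ge |\sfn|$ rounds, enough to drive $R$ to $\emptyset$ before the update instant. I would then chain this over all $\tau \ge 1$ by induction, the inductive hypothesis being exactly that the window-$\tau$ leaders broadcast the correct value $f_r(\tau)$.

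The step I expect to be the main obstacle is this liveness argument, where three facts must hold simultaneously: the node supplied by strong robustness must be a not-yet-accepted normal follower rather than being spent on a leader or adversary; the $2F+1$ threshold must survive removal of up to $F$ adversaries to leave a usable $F+1$ majority of genuine broadcasters; and the monotonicity of $P(t)$ on $W_\tau$ must keep that $F+1$ count from eroding, so each follower, once accepted, stays in $\C$ through the update instant and its stored $u_i$ is exactly the Step-4 value. Reconciling the worst-case round count with the window length $\eta - 1$, and dispatching the vacuous case $\sfn = \emptyset$ and the boundary case of a follower accepting exactly at the update instant, then finishes the proof.
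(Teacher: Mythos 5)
Your proposal is correct and follows essentially the same argument as the paper's proof: apply strong $(2F+1)$-robustness to the set of not-yet-informed normal followers (your $R(t)$ is the complement of the paper's accumulated layers $S_1,\dots,S_p$), use $F$-locality to extract $F+1$ genuine broadcasters of $f_r(\tau)$, bound the number of propagation rounds by $|S_f| < \eta$, and note that no normal follower can ever accept a value other than $f_r(\tau)$. Your separation into an explicit safety induction and a liveness/counting step is a slightly cleaner packaging of the same ideas, but not a different route.
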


\begin{proof}
Let $\tau \geq 1$ and consider the system at time step $t = t_0 + (\tau - 1)\eta$. 
At this time step, each leader agent $l \in \Le^\N$ broadcasts the value $f_r(t_0 + \tau\eta)$ to its out-neighbors. Note that $f_r(\tau\eta)$ is the state that each behaving leader will take on at the next state update timestep; i.e. $x_l(t_0 + \tau\eta) = f_r(\tau)\ \forall l \in \Le^\N$ by \eqref{eq:leader}.
By the definition of strong $2F+1$-robustness, $|\Le| \geq F$. Since $\A$ is an $F$-local model, $|\Le^\A| \leq F$ and therefore $|\Le^\N| \geq F+1$. In addition, note that a nonempty $S_f$ implies that $\eta > |S_f| \geq 1$.

Consider the set $C_1  = \sfn = (\V \backslash \Le) \backslash \A$. Note that $C_1 \subset \V \backslash \Le$. By the definition of strong $(2F+1)$-robustness, there exists a nonempty subset $S_1 = \{i_1 \in C_1 : |\V_{i_1} \cap (\Le \cup \A)| \geq 2F+1\}$.  
Note that as per Algorithm \ref{alg:MSRPA}, the only agents which broadcast a value at time $t_0$ will be either agents in $\Le$ or agents in $\A$. This implies that agents in $S_1$ receive messages only from agents in the set $\Le \cup \A$.
Since $\A$ is $F$-local, it is impossible for any $i_1 \in S_1$ to receive a misbehaving signal from more than $F$ in-neighbors. Therefore since $|V_{i_1} \cap (\Le \cup \A)| \geq 2F+1$ and $\A$ is $F$-local, each $i_1$ receives the value $f_r(\tau)$ from $F+1$ behaving leaders.

Let $t_1 = t_0 + (\tau-1)\eta + 1$. At time $t_1$, each agent $i_1 \in S_1$ sets $u_{i_1}(t_1) = f_r(\tau) - x_{i_1}(t_1)$ if $u_{i_1}$ is unbounded, or $\frac{(f_r(\tau) - x_{i_1}(t_1))u_M}{\max(u_M,|f_r(\tau) - x_{i_1}(t_1)|)}$ if $u_{i_1}$ is bounded by $u_M$. Each agent $i_1$ then broadcasts $f_r(\tau)$ to all its out-neighbors for all time steps $t \in [t_1,t_0 + \tau\eta)$ where $t_0 + \tau\eta$ is the next state update time. Consider the set $C_2 = C_1 \backslash S_1$, which satisfies $C_2 \subseteq \V \backslash \Le$. If $C_2$ is nonempty, by the definition of strong $2F+1$-robustness there exists a nonempty subset $S_2 = \{i_2 \in C_2 : |\V_{i_2} \cap (S_1 \cup \Le \cup \A)| \geq 2F+1 \}$. As per Algorithm \ref{alg:MSRPA}, the only agents which broadcast values at time $t_1$ will be agents in the set $S_1 \cup \Le \cup \A$.
Since $\A$ is $F$-local, it is impossible for any $i_2 \in S_2$ to receive a misbehaving signal from more than $F$ in-neighbors.
This implies that each agent $i_2 \in S_2$ receives the value $f_r(\tau)$ from at least $F+1$ behaving in-neighbors in the set $S_1 \cup \Le$.

This process can be iteratively repeated: at each time step $t_p = t_0 + (\tau-1)\eta + p$, $0 \leq p < \eta$, each agent $i_p \in S_p$ will have received the value $f_r(\tau)$ from at least $F+1$ behaving in-neighbors in the previous timestep.
Each $i_p$ will therefore set $u_{i_p}(t_p) = f_r(\tau) - x_{i_p}(t_p)$ if $u_{i_p}$ is bounded, or $u_{i_p}(t_p) = \frac{(f_r(\tau) - x_{i_p}(t))u_M}{\max(u_M,|f_r(\tau) - x_{i_p}(t)|)}$ if $u_{i_p}$ is bounded by $u_M$. Each agent $i_p$ will then broadcast $f_r(\tau)$ to all its out-neighbors for all time steps $t \in [t_p,t_0 + \tau\eta)$. The set $C_{p+1}$ is iteratively defined as $C_{p+1} = C_p \backslash S_p$. By the definition of strong $(2F+1)$-robustness, if $C_{p+1}$ is nonempty there exists a nonempty $S_{p+1} = \{i_{p+1} \in C_{p+1} : |\V_{i_{p+1}} \cap \pth{\pth{\bigcup_{j=1}^p S_j} \cup \Le \cup \A}| \geq 2F+1 \}$. As per Algorithm \ref{alg:MSRPA}, the only agents which broadcast values at time $t_p$ will be agents in the set $\pth{\pth{\bigcup_{j=1}^p S_j} \cup \Le  \cup \A}$. Since $\A$ is $F$-local, it is impossible for any $i_{p+1} \in S_{p+1}$ to receive a misbehaving signal from more than $F$ in-neighbors. 
This implies that each agent $i_{p+1} \in S_{p+1}$ receives the value of $f_r(\tau)$ from at least $F+1$ behaving in-neighbors in $\pth{\pth{\bigcup_{j=1}^p S_j} \cup \Le} $.

Observe that by definition, $C_p$ being nonempty implies $C_{p+1} < C_p$. Therefore there exists a finite $p'$ such that $C_{p'}$ is empty. Since $C_p \subseteq \sfn$ for all $0 < p \leq p'$, this implies that $p' \leq |\sfn| \leq |S_f| < \eta$. 
Let $t_{p'} = t_0 + (\tau-1)\eta + p'$, and observe that $t_{p'} < t_0 + \tau\eta$.
Note that an empty $C_{p'}$ implies that $\bigcup_{k = 1}^{p'-1} S_k = \sfn$, and that all $i \in \sfn$ and all $l \in \Le^\N$ are broadcasting the value $f(\tau)$ for all timesteps $t \in [t_{p'}, t_0 + \tau\eta - 1)$.
Since $\A$ is $F$-local, this implies that no agent $i \in \sfn$ will accept any other value than $f_r(\tau)$ on time steps $t \in [t_{p'},t_0 + \tau\eta)$.
By prior arguments, equations \eqref{eq:follower} and \eqref{eq:alpha}, and step 3) in Algorithm \ref{alg:MSRPA}, agents $i_1$ in $S_1$ satisfy $x_{i_1}(t_{p'}) = x_{i_1}(t_1)$ and $u_{i_1}(t_{p'}) = f_r(\tau) - x_{i_1}(t_{p'})$ if $u_{i_1}$ is unbounded, or $\frac{(f_r(\tau) - x_{i_1}(t_{p'}))u_M}{\max(u_M,|f_r(\tau) - x_{i_1}(t_{p'})|)}$ if $u_{i_1}$ is bounded by $u_M$. Since no agent $i \in \sfn$ will accept any other value than $f_r(\tau)$ on time steps $t \in [t_{p'},t_0 + \tau\eta)$, steps 2) and 3) of Algorithm \ref{alg:MSRPA} imply that $u_{i_1}(t)$ remains constant on the interval $t \in [t_{p'},t_0 + \tau\eta)$ for all $i_1 \in S_1$.
Similarly, for $1 \leq k \leq p'-1,\ k \in \Z$, it also holds that agents $i_k$ in $S_1$ satisfy $x_{i_k}(t_{p'}) = x_{i_k}(t_k)$ and $u_{i_k}(t_{p'}) = f_r(\tau) - x_{i_k}(t_{p'})$ if $u_{i_k}$ is unbounded, or $\frac{(f_r(\tau) - x_{i_k}(t_{p'}))u_M}{\max(u_M,|f_r(\tau) - x_{i_k}(t_{p'})|)}$ if $u_{i_k}$ is bounded by $u_M$. Since no agent $i \in \sfn$ will accept any other value than $f_r(\tau)$ on time steps $t \in [t_{p'},t_0 + \tau\eta)$, steps 2) and 3) of Algorithm \ref{alg:MSRPA} imply that $u_{i_k}(t)$ remains constant on the interval $t \in [t_{p'},t_0 + \tau\eta)$ for all $i_k \in S_k$. Since it has been shown previously that $\bigcup_{k = 1}^{p'-1} S_k = \sfn$, we therefore have $u_i(t_0 + \tau\eta-1) = f_r(\tau) - x_i(t_0 + \tau\eta -1)$ if $u_i$ is unbounded, or $u_i(t_0 + \tau\eta-1) = \frac{(f_r(\tau) - x_{i}(t_{t_0 + \tau\eta -1}))u_M}{\max(u_M,|f_r(\tau) - x_{i}(t_0 + \tau\eta -1)|)}$ if $u_i$ is bounded by $u_M$ for all $i \in \sfn$.
These results hold for any $\tau \geq 1$, which concludes the proof.
\end{proof}

Using Lemma \ref{lem:broadcast}, our first Theorem demonstrates conditions under which Problem \ref{prob:leadfollow} is solved using the MS-RPA algorithm.
\begin{theorem}
\label{thm:first}
	Let $\D = (\V,\E)$ be a nonempty, nontrivial, simple digraph with $S_f$ nonempty. Let $F \in \Z_+$ and $t_0 \in \Z$. 
Define the error function $e(t) = \max_{i \in \sfn} |x_i(t) - x_l(t)|$, $l \in \Le^\N$. 
Suppose that $\A$ is an $F$-local set, $\D$ is strongly $2F+1$-robust w.r.t. the set $\Le$, and all normally behaving agents apply the MS-RPA algorithm with parameter $F$ and unbounded inputs. If $\eta > |S_f|$, then $e(t)$ is nonincreasing for $t \in [t_0,\infty)$ and $e(t) = 0$ for all $t > t_0 + \eta$. 
\end{theorem}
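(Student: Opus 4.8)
The plan is to leverage Lemma \ref{lem:broadcast} almost directly: it already guarantees that at each pre-update step $t_0+\tau\eta-1$ every normal follower $i\in\sfn$ has set $u_i(t_0+\tau\eta-1)=f_r(\tau)-x_i(t_0+\tau\eta-1)$ in the unbounded case. First I would record the behavior of the normal leaders: from the initialization $x_l(t_0)=f_r(0)$ and the update rule \eqref{eq:leader}, every $l\in\Le^\N$ holds the common value $f_r(\tau)$ on the whole interval $t\in[t_0+\tau\eta,\,t_0+(\tau+1)\eta-1]$ for each $\tau\ge 0$ (with $\tau=0$ giving $f_r(0)$). Since $|\Le^\N|\ge F+1\ge 1$ as established in the proof of Lemma \ref{lem:broadcast}, the set $\Le^\N$ is nonempty and $e(t)$ is well defined and independent of the particular $l\in\Le^\N$ chosen.

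The key step is to evaluate the follower state exactly at the update times. Fix $\tau\ge 1$ and substitute the value of $u_i(t_0+\tau\eta-1)$ supplied by Lemma \ref{lem:broadcast} into the follower dynamics \eqref{eq:follower}. Since $\alpha(t_0+\tau\eta-1)=1$ by \eqref{eq:alpha}, this yields
\begin{align*}
x_i(t_0+\tau\eta) &= x_i(t_0+\tau\eta-1) + u_i(t_0+\tau\eta-1) \\
&= x_i(t_0+\tau\eta-1) + \big(f_r(\tau) - x_i(t_0+\tau\eta-1)\big) = f_r(\tau)
\end{align*}
for every $i\in\sfn$. Because every $l\in\Le^\N$ also equals $f_r(\tau)$ at $t_0+\tau\eta$, it follows that $e(t_0+\tau\eta)=0$ for all $\tau\ge 1$.

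Next I would propagate this equality across the inter-update intervals using the fact that states are piecewise constant. For any $t$ not of the form $t_0+\tau\eta-1$ we have $\alpha(t)=0$, so \eqref{eq:follower} gives $x_i(t+1)=x_i(t)$ for normal followers, while \eqref{eq:leader} holds the leader states fixed as well. Hence on each interval $[t_0+\tau\eta,\,t_0+(\tau+1)\eta-1]$ every normal agent's state is frozen at its value at $t_0+\tau\eta$, so $e(t)=e(t_0+\tau\eta)=0$ for all $\tau\ge 1$; in particular $e(t)=0$ for every $t\ge t_0+\eta$, which covers the claim $e(t)=0$ for $t>t_0+\eta$. The same freezing argument on $[t_0,\,t_0+\eta-1]$ shows $e(t)=e(t_0)$ there.

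Finally, monotonicity follows by assembling these pieces: $e$ is constant and equal to $e(t_0)$ on $[t_0,t_0+\eta-1]$, drops to $0$ at the first update $t_0+\eta$ (and $e(t_0)\ge 0$ since $e$ is a maximum of absolute values), and remains $0$ thereafter; thus $e(t+1)\le e(t)$ for every $t\ge t_0$. I expect no genuine obstacle here, as Lemma \ref{lem:broadcast} does the heavy lifting; the main care required is purely bookkeeping of the timing indices, namely correctly matching the pre-update step $t_0+\tau\eta-1$ (where the input is set) with the post-update step $t_0+\tau\eta$ (where the state lands), and verifying via \eqref{eq:alpha} that nothing in the dynamics can move the states off the common value during the interior of each interval.
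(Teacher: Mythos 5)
Your proposal is correct and follows essentially the same route as the paper: invoke Lemma \ref{lem:broadcast} to obtain $u_i(t_0+\tau\eta-1)=f_r(\tau)-x_i(t_0+\tau\eta-1)$, substitute into \eqref{eq:follower} to get $x_i(t_0+\tau\eta)=f_r(\tau)=x_l(t_0+\tau\eta)$, and use the piecewise-constant behavior of the states from \eqref{eq:leader}--\eqref{eq:alpha} to conclude that $e$ is constant on each inter-update interval, hence nonincreasing and identically zero from the first update onward. Your write-up is simply a more explicit version of the paper's bookkeeping, with no substantive difference.
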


\begin{proof}
	By equations \eqref{eq:leader} through \eqref{eq:alpha}, each state $x_i(t)$ for all $i \in \sfn$ and $x_l(t)$ for all $l \in \Le^\N$ are constant on each time interval $t \in [t_0 + (\tau-1)\eta,t_0 + \tau\eta)$ for all $\tau \geq 1$. By Lemma \ref{lem:broadcast}, at times $t_0 + \tau\eta - 1$ for all $\tau \geq 1$ we have $u_i(t_0 + \tau\eta - 1) = f_r(\tau) - x_i(t_0 + \tau\eta - 1)$ for all $i \in \sfn$. By equations \ref{eq:leader} through \ref{eq:alpha}, we therefore have $x_i(t_0 + \tau\eta) = x_l(t_0 + \tau\eta) = f_r(\tau)$ $\forall i \in \sfn$, $\forall l \in \Le^\N$, $\forall \tau \geq 1$. It therefore holds that $e(t)$ is nonincreasing for $t \in [t_0,\infty)$ and $e(t) = 0$ for all $t \geq t_0 + \eta$.
\end{proof}




In some systems, agents' states have input bounds of the form $|u_i(t)| \leq u_M$ $\forall t \geq t_0$, $u_M \in \R_+$. Our next result shows conditions under which Problem 1 is solved by the MS-RPA algorithm even when such input bounds are imposed.
\begin{theorem}
\label{thm:two}
Let $\D = (\V,\E)$ be a nonempty, nontrivial, simple digraph with $S_f$ nonempty. Let $F \in \Z_+$, $u_M \in \R_+$, and $t_0 \in \Z$. Define the error function $e(t) = \max_{i \in \sfn} |x_i(t) - x_l(t)|$, $l \in \Le^\N$. Suppose that $\A$ is an $F$-local set, $\D$ is strongly $(2F+1)$-robust w.r.t. the set $\Le$, and all normally behaving agents apply the MS-RPA algorithm with parameter $F$. Suppose further that the input of follower agents is bounded by $|u_i(t)| \leq u_M$ $\forall t_k \geq t_0$, and the reference signal satisfies $|f_r(\tau+1) - f_r(\tau)| \leq u_M - \eps$ $\forall \tau \geq 0$, for some $\eps > 0,\ \eps \in \R$ and with $\tau \in \Z_+$. If $\eta > |S_f|$, then $e(t)$ is nonincreasing for $t \in [t_0,\infty)$ and there exists a $T \in \Z_+$ such that 
$e(t) = 0$ for all $t \geq \tau_0 + T$, $\forall i \in \sfn$, $\forall l \in \Le^\N$. Furthermore, the value of $T$ can be found as
\begin{align*}
	T  = \ceil{\frac{V(\tau_0\eta)}{ \eps}} + 1
\end{align*}
\end{theorem}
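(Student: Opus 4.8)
The plan is to layer a discrete Lyapunov-descent argument on top of Lemma~\ref{lem:broadcast}, monitoring the tracking error only at the state-update times $t_0 + \tau\eta$. First I would invoke Lemma~\ref{lem:broadcast} in the bounded-input regime to guarantee that for every $\tau \geq 1$ and every $i \in \sfn$, the input applied at $t_0 + \tau\eta - 1$ is exactly the saturated step toward $f_r(\tau)$. Because \eqref{eq:leader}--\eqref{eq:alpha} keep both follower and normally-behaving leader states piecewise constant on each interval $[t_0+(\tau-1)\eta,\, t_0+\tau\eta)$, the error $e(t)$ is constant there, so it suffices to control the sampled quantity $V(t_0+\tau\eta) = \max_{i\in\sfn}|x_i(t_0+\tau\eta)-f_r(\tau)|$, which coincides with $e(t)$ on the interval over which the post-update states are held.

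The core is a one-step contraction. Writing $d_i(\tau)=|x_i(t_0+\tau\eta)-f_r(\tau)|$ and noting that the follower state is frozen between updates, the argument of the next saturated step is $f_r(\tau+1)-x_i(t_0+\tau\eta)$, whose magnitude I would bound by the triangle inequality as $|f_r(\tau+1)-f_r(\tau)| + d_i(\tau) \leq (u_M-\eps)+d_i(\tau)$, using the slow-drift hypothesis. Applying the saturation gives $d_i(\tau+1)=\max\pth{0,\,|f_r(\tau+1)-x_i(t_0+\tau\eta)|-u_M}$, and combining the two yields the key recursion $d_i(\tau+1)\leq \max\pth{0,\,d_i(\tau)-\eps}$. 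I would verify both regimes separately: if $d_i(\tau)<\eps$ the drift bound forces the step to be unsaturated and the follower lands exactly on $f_r(\tau+1)$, giving $d_i(\tau+1)=0$; if $d_i(\tau)\geq\eps$ the saturated case gives $d_i(\tau+1)\leq d_i(\tau)-\eps$. Taking the maximum over $i$ propagates this to $V(t_0+(\tau+1)\eta)\leq \max\pth{0,\,V(t_0+\tau\eta)-\eps}$.

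Both conclusions then follow. Monotonicity of $e(t)$ is immediate, since $e$ is constant on each interval and the recursion forbids any increase across update boundaries. For finite-time convergence, each update strictly decreases $V$ by at least $\eps$ while $V>0$, so after at most $\ceil{V(\tau_0\eta)/\eps}$ updates the error is driven to zero and then held there; the stated $T=\ceil{V(\tau_0\eta)/\eps}+1$ absorbs the index offset and guarantees $e(t)=0$ from the update at which $V$ first vanishes onward.

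I expect the main obstacle to be the careful bookkeeping of the saturation operator, specifically verifying that the post-update residual is \emph{exactly} $\max\pth{0,\,|f_r(\tau+1)-x_i(t_0+\tau\eta)|-u_M}$ rather than merely bounded by it, in both the saturated and unsaturated regimes; once that identity is pinned down the descent is routine. A secondary point to handle cleanly is the initial interval $[t_0,t_0+\eta)$, where the relevant target is $f_r(0)$ and the error equals $\max_{i\in\sfn}|x_i(t_0)-f_r(0)|$, so the recursion must be seeded correctly at $\tau=0$.
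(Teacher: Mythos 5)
Your proposal is correct, and it takes a genuinely different route from the paper. The paper's proof works with the spread $V(t) = M(t) - m(t)$, where $M$ and $m$ are the max and min over the union of normal-follower states and the leader state, and proceeds by a three-case analysis: (1) if $f_r(\tau)$ lies in $B(m,u_M)\cap B(M,u_M)$ then $V$ drops to zero in one update, (2) if $V\le\eps$ then $V$ drops to zero, and (3) otherwise the extremal agent moves a full $u_M$ toward the reference while the opposite extreme recedes by at most $u_M-\eps$, forcing $V$ down by at least $\eps$. Your argument instead tracks the per-agent residual $d_i(\tau)=|x_i(t_0+\tau\eta)-f_r(\tau)|$ directly, uses the exact saturation identity $d_i(\tau+1)=\max\pth{0,\,|f_r(\tau+1)-x_i(t_0+\tau\eta)|-u_M}$ (which is indeed exact in both regimes, as you anticipated), and closes with one triangle inequality to get $d_i(\tau+1)\le\max\pth{0,\,d_i(\tau)-\eps}$. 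This is cleaner: it avoids the ball-membership case split entirely, it reasons about the quantity the theorem actually bounds (the tracking error $e$ rather than the diameter $M-m\ge e$), and as a byproduct it yields a convergence time $\ceil{e(t_0)/\eps}+1$ that is no larger than the paper's $\ceil{V(t_0)/\eps}+1$, so the stated $T$ is certainly sufficient. What the paper's spread-based Lyapunov function buys is mainly continuity with the standard resilient-consensus proof template; here, because Lemma~\ref{lem:broadcast} guarantees every normal follower steps toward the \emph{same} target $f_r(\tau)$, the spread machinery is not needed and your per-agent recursion suffices. The only points requiring care in your write-up are the ones you already flagged: pinning down the exact residual identity and seeding the recursion at $\tau=0$ with $d_i(0)=|x_i(t_0)-f_r(0)|$, which works because the hypothesis $|f_r(\tau+1)-f_r(\tau)|\le u_M-\eps$ is assumed for all $\tau\ge 0$ and the normal leaders initialize at $f_r(0)$.
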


\begin{proof}
Since all behaving leaders states satisfy $|x_{l_1}(t) - x_{l_2}(t)| = 0$ $\forall t \geq t_0$ $\forall l_1,l_2 \in \Le^\N$ as per the MS-RPA algorithm and \eqref{eq:leader}, we will simply write $x_l(t)$ for brevity. 
Recall that $x_l(t) = f_r(\tau)$ $\forall t \in [(\tau-1)\eta,\tau\eta)$, $\forall \tau \geq 1$ as per \eqref{eq:leader}.
The following notation will be used for the proof:
\begin{align*}
	m(t) &= \min_{i \in \sfn} (x_i(t), x_l(t)), & M(t) &= \max_{i \in \sfn} (x_i(t), x_l(t)).	
\end{align*}
Consider the Lyapunov candidate $V(t) = M(t) - m(t)$. Clearly, $V(t) = 0$ if and only if $x_i(t) = x_l(t)$ for all $i \in \sfn$. 
Observe that by the results of Lemma \ref{lem:broadcast}, $u_i(t_0 + \tau\eta - 1) = \frac{(f_r(\tau) - x_i(t_0 + \tau\eta - 1))u_M}{\max(u_M,|f_r(\tau) - x_i(t_0 + \tau\eta - 1)|)}$ for all $i \in \sfn$, $\forall \tau \geq 1$.
The form of $u_i$ implies that for all $i \in \sfn$, the value of $x_i(t_0 + \tau\eta)$ satisfies $|x_i(t_0 + \tau\eta) - x_i(t_0 + (\tau-1)\eta)| \leq u_M$ for all $\tau \geq 1$, implying $x_i(t_0 + \tau\eta) \in B(x_i(t_0 + (\tau-1)\eta),u_M)$ (recall that $B(x,r) = \{z \in \R : |x - z| \leq r \}$). In addition, by the Theorem statement we have $|f_r(\tau) - f_r(\tau-1)| \leq u_M - \eps$ $\forall \tau \geq 1$.
We prove the result of Theorem \ref{thm:two} by showing that all of the following three statements hold true $\forall \tau \geq 1$:
\begin{enumerate}
	\item $f_r(\tau) \in B(m(t_0 + (\tau-1)\eta),u_M) \cap B(M(t_0 + (\tau-1)\eta),u_M) \implies V(t_0 + \tau\eta) = 0$	
	\item $V(t_0 + (\tau-1)\eta) \leq \eps \implies V(t_0 + \tau\eta) = 0$
	\item $V(t_0 + (\tau-1)\eta) > \eps$ and $f_r(\tau) \notin B(m(t_0 + (\tau-1)\eta),u_M) \cap B(M(t_0 + (\tau-1)\eta),u_M) \implies V(t_0 + \tau\eta) - V(t_0 +(\tau-1)\eta) \leq -\eps$
\end{enumerate}

\emph{Proof of 1):} Suppose $f_r(\tau) \in B(m(t_0 +(\tau-1)\eta),u_M) \cap B(M(t_0 +(\tau-1)\eta),u_M)$. This implies that $|f_r(\tau) - m(t_0 +(\tau-1)\eta)| \leq u_M$ and $|f_r(\tau) - M(t_0 +(\tau-1)\eta)| \leq u_M$, which implies $M(t_0 +(\tau-1)\eta) - u_M \leq f_r(\tau) \leq m(t_0 +(\tau-1)\eta) + u_M$. We have
{
\medmuskip=0mu
\thinmuskip=0mu
\begin{align*}
	f_r(\tau) &\geq M(t_0 +(\tau-1)\eta) - u_M \geq x_i(t_0 +(\tau-1)\eta) - u_M, \\
	f_r(\tau) &\leq m(t_0 +(\tau-1)\eta) + u_M \leq x_i(t_0 +(\tau-1)\eta) + u_M,
\end{align*}
}
for all $i \in \sfn$, which implies $|x_i(t_0 +(\tau-1)\eta) - f_r(\tau)| \leq u_M$ for all $i \in \sfn$. By Lemma \ref{lem:broadcast} each agent $i \in \sfn$ therefore selects $u_i(t_0 +\tau\eta - 1) = \frac{(f_r(\tau) - x_i(t_0 + (\tau-1)\eta-1 ))u_M}{\max(u_M,|f_r(\tau) - x_i(t_0 + (\tau-1)\eta - 1)|)} = f_r(\tau) - x_i(t_0 +\tau\eta-1)$, implying $|x_i(t_0 +\tau\eta) - x_l(t_0 +\tau\eta)| = 0$ $\forall i \in \sfn$ at time $t_0 +\tau\eta$, implying $V(t_0 + \tau\eta) = 0$.

\emph{Proof of 2):} $V(t_0 +(\tau-1)\eta) \leq \eps$ implies $M(t_0 +(\tau-1)\eta) - m(t_0 +(\tau-1)\eta) \leq \eps$. By definition of $m(t_0 +(\tau-1)\eta)$ and $M(t_0 +(\tau-1)\eta)$, we have $m(t_0 +(\tau-1)\eta) \leq x_l(t_0 +(\tau-1)\eta) \leq M(t_0 +(\tau-1)\eta)$. Since $|f_r(\tau) - f_r(\tau -1)| = |x_l(t_0 +\tau\eta) - x_l(t_0 +(\tau-1)\eta)| \leq u_M - \eps$ as per the Theorem statement, we have $m(t_0 +(\tau-1)\eta) - u_M + \eps \leq x_l(t_0 +\tau\eta) \leq M(t_0 +(\tau-1)\eta) + u_M - \eps$. However, $M(t_0 +(\tau-1)\eta) - m(t_0 +(\tau-1)\eta) \leq \eps$ implies $M(t_0 +(\tau-1)\eta) - u_M \leq m(t_0 +(\tau-1)\eta) - u_M + \eps$. Similarly, we have $m(t_0 +(\tau-1)\eta) \geq M(t_0 +(\tau-1)\eta) - \eps$ implying $m(t_0 +(\tau-1)\eta) + u_M \geq M(t_0 +(\tau-1)\eta) + u_M - \eps$. Therefore $M(t_0 +(\tau-1)\eta) - u_M \leq m(t_0 +(\tau-1)\eta) - u_M + \eps \leq x_l(t_0 +\tau\eta) \leq M(t_0 +(\tau-1)\eta) + u_M - \eps \leq m(t_0 +(\tau-1)\eta) + u_M$, which implies $f_r(\tau) = x_l(t_0 +\tau\eta)$ must satisfy $f_r(\tau) = x_l(t_0 +\tau\eta) \in B(m(t_0 +(\tau-1)\eta),u_M) \cap B(M(t_0 +(\tau-1)\eta),u_M)$, implying $V(t_0 +\tau\eta) = 0$ from the arguments in the Proof of 1).

\emph{Proof of 3):} $V(t_0 + (\tau-1)\eta) > \eps$ and $f_r(\tau) \notin B(m(t_0 + (\tau-1)\eta),u_M) \cap B(M(t_0 + (\tau-1)\eta),u_M)$ imply that $x_l(t_0 + \tau\eta) = f_r(\tau) \notin B(m(t_0 + (\tau-1)\eta),u_M)$ or $x_l(t_0 + \tau\eta) = f_r(\tau) \notin B(M(t_0 + (\tau-1)\eta),u_M)$. Without loss of generality, consider the case where $x_l(t_0 + \tau\eta) \notin B(M(t_0 + (\tau-1)\eta),u_M)$. Since $m(t_0 + (\tau-1)\eta) \leq x_l(t_0 + (\tau-1)\eta)$ by definition, and $|x_l(t_0 + \tau\eta) - x_l(t_0 + (\tau-1)\eta)| \leq u_M - \eps$, we therefore have $m(t_0 + (\tau-1)\eta) - x_l(t_0 + \tau\eta) \leq u_M - \eps$, which implies $f_r(\tau) = x_l(t_0 + \tau\eta) \geq m(t_0 + (\tau-1)\eta)-(u_M - \eps)$. 
Since $u_i(t_0 + \tau\eta-1) = \frac{(f_r(\tau) - x_i(t_0 + (\tau-1)\eta-1 ))u_M}{\max(u_M,|f_r(\tau) - x_i(t_0 + (\tau-1)\eta - 1)|)}$ $\forall i \in \sfn$ by Lemma \ref{lem:broadcast}, $f_r(\tau) \geq m(t_0 + (\tau-1)\eta)$ implies $m(t_0 + \tau\eta) \geq m(t_0 + (\tau-1)\eta)$ and $f_r(\tau) < m(t_0 + (\tau-1)\eta)$ implies $m(t_0 + \tau\eta) \geq m(t_0 + (\tau-1)\eta)-(u_M - \eps)$.
We therefore have $m(t_0 + (\tau-1)\eta) -  m(t_0 + \tau\eta)  \leq (u_M - \eps)$.
By Lemma \ref{lem:broadcast} all agents $i \in \sfn$ such that $x_i(t_0 + (\tau-1)\eta) = M(t_0 + (\tau-1)\eta)$ will have $u_i(t_0 + \tau\eta-1) = \frac{(f_r(\tau) - x_i(t_0 + (\tau-1)\eta-1 ))u_M}{\max(u_M,|f_r(\tau) - x_i(t_0 + (\tau-1)\eta - 1)|)} = u_M \text{sgn}(f_r(\tau) - x_i(t_0 + (\tau-1)\eta -1)) = -u_M$, since $x_l(t_0 + \tau\eta) = f_r(\tau) \notin B(M(t_0 + (\tau-1)\eta),u_M)$. This implies $M(t_0 + \tau\eta) - M(t_0 + (\tau-1)\eta) = -u_M$. 
Therefore,
{
\medmuskip=0mu
\thinmuskip=0mu
\thickmuskip=0mu
\begin{align*}
	&V(t_0 + \tau\eta) - V(t_0 + (\tau-1)\eta) = M(t_0 + \tau\eta) \nonumber\\ & \hspace{4em}- m(t_0 + \tau\eta) 
	  - (M(t_0 + (\tau-1)\eta) - m(t_0 + (\tau-1)\eta)), \nonumber\\
	&= M(t_0 + \tau\eta) - M(t_0 + (\tau-1)\eta) + m(t_0 +(\tau-1)\eta) - m(t_0 + \tau\eta), \nonumber\\
	&\leq  -u_M + u_M - \eps \hspace{.5em} \leq \hspace{.5em} -\eps.
\end{align*}
}
Similar arguments hold if $x_l(t_0 + \tau\eta) \notin B(m(t_0 + (\tau-1)\eta),u_M)$, which yields 3).

From 1), 2) and 3) we can infer that if $V(t_0 +(\tau-1)\eta) > \eps$ and $x_l(t_0 +\tau\eta) \notin B(m(t_0 +(\tau-1)\eta),u_M) \cap B(M(t_0 +(\tau-1)\eta),u_M)$ for $\tau \geq 1$, then $V(t_0 +\tau\eta) - V(t_0 +(\tau-1)\eta) \leq -\eps$. This implies that the value of $V(\cdot)$ decreases until $V(t_0 +(\tau-1+k')\eta) \leq \eps$ or $x_l(t_0 +(\tau+k')\eta) \in B(m(t_0 +(\tau-1+k')\eta),u_M) \cap B(M(t_0 +(\tau-1+k')\eta),u_M)$ for some finite $k' \geq 1$, either of which conditions imply that $V(t_0 +(\tau+k'+1)\eta) = 0$. In addition, note that $V(t_0 +(\tau+k')\eta) = 0 \implies V(t_0 +(\tau+k')\eta) \leq \eps$, implying by 2) that $V(t_0 +(\tau+k'+p)\eta) = 0$ for all $p \geq 0,\ p \in \Z_+$. 
We can therefore conclude that there exists a $T \in \Z_+$ such that $|x_i(t) - x_l(t)| = 0$ $\forall i \in \sfn$, $\forall t \geq \tau_0 + T$, which implies $e(t) = 0$ $\forall t \geq \tau_0 + T$. In addition, statements 1),  2), and 3) imply that $(V(t_0 + \tau\eta) - V(t_0 + (\tau-1)\eta) \leq 0$ $\forall \tau \geq 1$, which along with \eqref{eq:leader}, \eqref{eq:follower}, and \eqref{eq:alpha} implies that $e(t)$ is nonincreasing $\forall t \geq t_0$.

An exact value of $T$ can be found by using statement 3) which implies that the slowest rate of decrease of $V(\cdot)$ is $V(t_0 +\tau\eta) - V(t_0 +(\tau-1)\eta) \leq -\eps$. Letting $k' = (\ceil{V(t_0) / \eps} + 1)$, it is straighforward to show that after $k'\eta$ time steps, $V(t_0 + k'\eta) = 0$. Therefore $T$ can be taken to be $T = k' = (\ceil{V(t_0) / \eps} + 1)$.
\end{proof}

}

\section{Simulations}
\label{sec:simulations}

This section presents simulations of the MS-RPA algorithm conducted on an undirected $k$-circulant digraph \cite{usevitch2018resilient,usevitch2017r} denoted $\D_1$ with $n = 14$ agents and parameter $k = 5$. The set of leaders is $\Le = \{1,\ldots,5\}$, and $S_f = \{6,\ldots,14\}$. Our results in \cite{usevitch2018resilient} demonstrate conditions under which $k$-circulant undirected and directed graphs are strongly $r$-robust with respect to $\Le$. Briefly, if a $k$-circulant undirected or directed graph $\D$ contains a set of consecutive agents by index $P_L$ such that $|P_L| \leq k$ and $|P_L \cap \Le| \geq r$, $\D$ is strongly $r$-robust w.r.t. $\Le$. Letting $P_L = \{1,\ldots,5 \}$, we clearly have $|P_L| \leq k = 5$ and $|P_L \cap \Le| \geq 5$, implying $\D_1$ is $5$-robust. The parameter $F$ is set to $F=2$, implying that $\D_1$ is strongly $(2F+1)$-robust w.r.t. $\Le$. 

In the first simulation, agents inputs are unbounded. All agents begin with randomly initial states on the interval $[-25,25]$. The initial time $t_0 = 0$, the reference function is $f_r(\tau) = 10\sin(\tau/\pi)$, and the communication rate is defined by $\eta = 10$. Note that $\eta > |S_f| = 9$.
Two leader agents $\A = \{1, 5\}$ are misbehaving by updating their state according to an arbitrary function \emph{and} sending random values to their out-neighbors at each time step;  the misbehavior is \emph{malicious} \cite{leblanc2013resilient} because each misbehaving agent sends the same misinformation to each of its out-neighbors. We emphasize that \emph{the normally behaving agents have no knowledge about which agents are misbehaving}. All other normally behaving agents apply the MS-RPA algorithm (Algorithm \ref{alg:MSRPA}). 
The normally behaving follower agents achieve consensus to the leader agents at time step $t_0 + \eta$, where $t_0 = 0$, and successfully track the leaders' time-varying state \emph{exactly} for all remaining time. 

In the second simulation, all parameters are the same  as in the first simulation except 1) agents' inputs are bounded by $u_M = 10.1$, and 2) the set of misbehaving agents is $A = \{4,11\}$ (one leader, one follower). Again, normally behaving agents have no knowledge about which agents are misbehaving. Note that $|f_r(\tau+1) - f_r(\tau)| < u_M$ for all $\tau \in Z_+$, and therefore convergence is guaranteed after a finite number of time steps $T$ as per Theorem \ref{thm:two}. This is shown in Figure \ref{fig:finite}, where the followers converge exactly to the leaders after a finite number of time steps despite the influence of the misbehaving agents.

\begin{figure}
\centering
\includegraphics[width=.88\columnwidth]{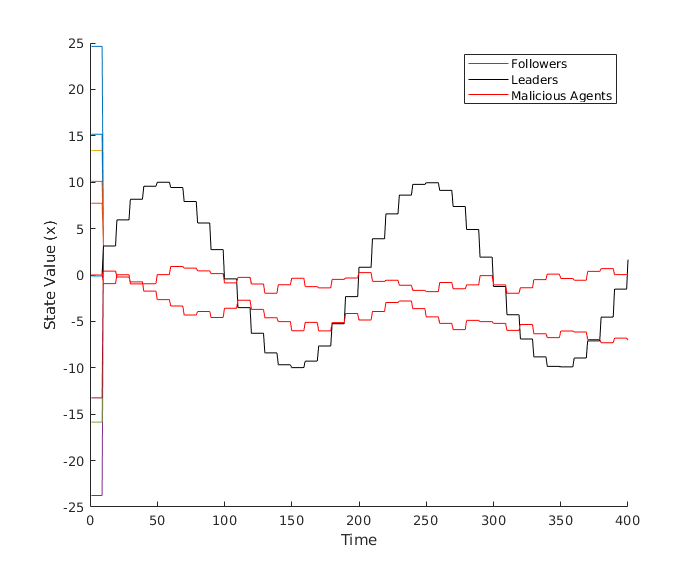}
\caption{Demonstration of the MS-RPA algorithm. Leader agents' states are denoted by black lines, misbehaving agent's states are denoted by red lines, and follower agents' states are denoted by the multi-colored lines. When the inputs are unbounded, exact convergence to the leaders' trajectory occurs for all $t \geq \eta$, where $\eta=10$ is the communication rate.}
\label{fig:unbounded}
\end{figure}

\begin{figure}
\centering
\includegraphics[width=.88\columnwidth]{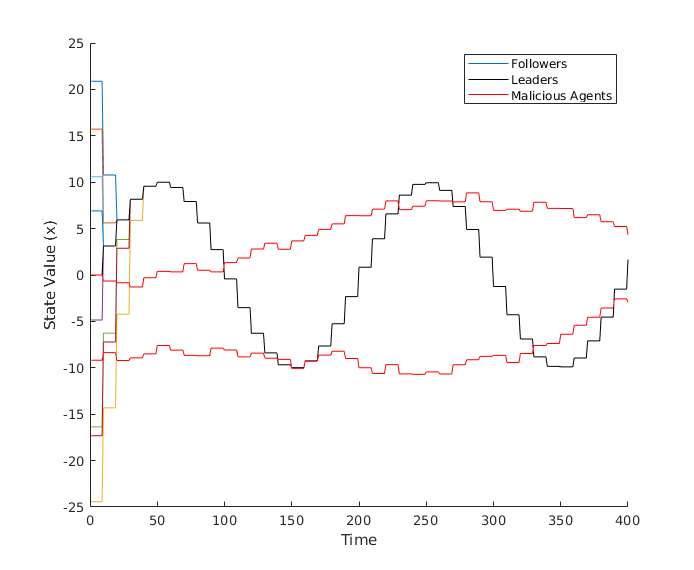}
\caption{Demonstration of the MS-RPA algorithm with input bounds. When the inputs of the agents are bounded, exact convergence to the time-varying leaders' states is still guaranteed according to the conditions in Theorem \ref{thm:two}. In this simulation, we again have $\eta = 10$.}
\label{fig:finite}
\end{figure}

\section{Conclusion}
\label{sec:conclusion}

In this paper, we presented algorithms and conditions for agents with discrete-time dynamics to resiliently track a reference signal propagated by a set of leader agents despite a bounded number of the leaders and followers behaving adversarially. Future work will include extending these results to time-varying graphs with asynchronous communication.

\bibliographystyle{IEEEtran}

\bibliography{technote.bib}

\end{document}